\spnewtheorem{prop}{Proposition}{\bf}{\it}
\newcommand{\todo}[1]{}
\renewcommand{\todo}[1]{{\bf{ \color{red} TODO: {#1}}}}
\newcommand{\nocolor}[1]{{\color{black} #1}}
\begin{document}




\title{Calibrations Scheduling Problem with Arbitrary Lengths and Activation Length
\thanks{This work has been supported by the ALGONOW project of the THALES program and the Special Account for Research Grants of National and Kapodistrian
University of Athens,
by NSFC (No. 61433012), Shenzhen research grant (KQJSCX20180330170311901, JCYJ20180305180840138 and GGFW2017073114031767).
}}



\author{Eric Angel			\and
        Evripidis Bampis		\and
        Vincent Chau			\and
        Vassilis Zissimopoulos 
}


\institute{
Eric Angel			\at
IBISC, University Paris Saclay, Evry, France\\
\email{eric.angel@univ-evry.fr} 
\and
Evripidis Bampis		\at
Sorbonne Universit\'e, CNRS, Laboratoire d'Informatique de Paris 6, LIP6, F-75005 Paris, France
\email{Evripidis.Bampis@lip6.fr} 
\and
Vincent Chau \at          
Shenzhen Institutes of Advanced Technology, Chinese Academy of Sciences, Shenzhen, China\\
\email{vincentchau@siat.ac.cn}
\and
Vassilis Zissimopoulos \at
Department of Informatics \& Telecommunications,
National and Kapodistrian University of Athens, Athens, Greece\\
\email{vassilis@di.uoa.gr} 
}

\date{Received: date / Accepted: date}
\maketitle

\begin{abstract}
Bender et al. (SPAA 2013) proposed a theoretical framework for testing in contexts where safety mistakes must be avoided.
Testing in such a context is made by machines that need to be calibrated in a regular basis. Since calibrations have a non-negligible cost, it is important to study policies minimizing the total calibration cost while performing all the necessary tests. We focus on the single-machine setting, and we study the complexity status of different variants of the problem. First, we extend the model by considering that the jobs have arbitrary processing times and we propose an optimal polynomial time algorithm when preemption of jobs is allowed. Then, we study the case where there are many types of calibrations with their corresponding lengths and costs. We prove that the problem becomes NP-hard for arbitrary processing times even when the preemption of the jobs is allowed. Finally, we focus on the case of unit processing time jobs, and we show that a more general problem, where the recalibration of the machine is not instantaneous, can be solved in polynomial time via dynamic programming. 
\end{abstract}
%
%

%

\section{Introduction}

The scheduling problem, whose objective is to minimize the number of calibrations, was introduced by \cite{bender2013efficient}. It is motivated by the Integrated Stockpile Evaluation (ISE) program~(\cite{ise}) at Sandia National Laboratories for testing nuclear weapons in contexts where safety mistakes may have serious consequences. This motivation can be extended to the machines that need to be calibrated carefully to ensure the accuracy. Calibrations have extensive applications in several areas, including robotics (\cite{bernhardt1993robot, evans1982method, nguyen2013new}), pharmaceuticals (\cite{forina1998multivariate, bansal2004qualification}), and digital cameras  (\cite{baer2005self, barton2006sensor, zhang2002method}).
Formally, the problem can be stated as follows:  given a set $\cal{J}$ of $n$ jobs (tests), where each job $j$ is characterized by its release time $r_j$, its deadline $d_j$ and its processing time $p_j$. Each job must be processed inside $[r_j,d_j)$. We are also given a (set of) testing machine(s) that must be calibrated regularly. Calibrating a machine incurs a unit cost, and it is instantaneous, i.e., a machine can be calibrated between the execution of two jobs that are processed consecutively. A machine stays calibrated for $T$ time-units, and a job can only be processed during an interval where the machine is calibrated. The goal is to find a feasible schedule performing all the tests (jobs) between their release times and deadlines and minimizing the number of calibrations. \cite{bender2013efficient} studied the case of \emph{unit-time} jobs. They considered both the single-machine and multiple-machine problems. For the single-machine case, they showed it could be solved in polynomial time: their algorithm is called the Lazy Binning. For the multiple-machine case, they proposed a 2-approximation algorithm. However, the complexity status of the multiple-machine case with unit-time jobs remained open. Very recently, \cite{ChenLL019} gave a polynomial-time algorithm when the number of machines is constant. However, the running time grows exponentially with the number of machines. They gave a PTAS for an arbitrary number of machines.

\cite{FinemanS15} studied a first generalization of the problem by considering that the jobs have arbitrary processing times. They considered the multiple-machine case where jobs cannot be interrupted once it has been started. Since the feasibility problem is NP-hard, they considered a {\em resource-augmentation}  version of the problem (\cite{KalyanasundaramP00}). They were able to relate this version with the classical {\em machine-minimization} problem (\cite{PhillipsSTW02}) in the following way. Suppose there is an $s$-speed $\alpha$-approximation algorithm for the machine-minimization problem, then there is a $O(\alpha)$-machine $s$-speed $O(\alpha)$-approximation for the resource-augmentation version of the problem of minimizing the number of calibrations.  

Other objectives have been studied in the literature. One of the variants is studied by \cite{ChauLMW17}; they considered the flow time problem with calibration constraints (the flow time of a job is the length of the duration from its release time until its completion). They investigated the online version in which the goal is to minimize the total (weighted) flow time as well as the total cost of the calibrations. They proposed several constant competitive online algorithms where jobs are not known in advance and a polynomial-time algorithm for the offline case. \cite{Wang18} studied the time slot cost variant. Scheduling a job incurs a different cost for each time slot. \cite{Wang18} considered the case of jobs with uniform processing time, and the goal is to minimize the total cost incurred by the jobs with a limited number of calibrations. \cite{ChauFLWZ019} studied the throughput variant of the calibration scheduling problem. This variant is, in fact, a generalization of the calibration minimization problem. If this variant can be solved optimally in polynomial time, it implies that the minimization problem can be solved in polynomial time. Furthermore, \cite{ChauLWZZ19} studied the batch calibrations variant. Calibrations must occur at the same moment on different machines. Additionally, the cost of a batch of calibrations is defined by a non-decreasing function that depends on the number of calibrations occurring at this batch. They showed that this problem could be solved in polynomial time, and gave several faster approximation algorithms for specific cost functions.

A notable statement from \cite{bender2013efficient} attracted our attention:
``{\em As a next step, we hope to generalize our model to capture more
aspects of the actual ISE problem. For example, machines may not
be identical, and calibrations may require machine time. Moreover,
some jobs may not have unit size}".

\paragraph{Our contributions.} In this paper, we investigate the single-machine case without resource augmentation, and we study the complexity status of different variants of the calibration cost minimization problem. In Section~\ref{sec:warm}, we study the problem when the jobs may have arbitrary processing times, and the preemption of the jobs is allowed: the processing of any job may be interrupted and resumed later.  Clearly, by using the optimal algorithm of \cite{bender2013efficient} for unit-time jobs, we can directly obtain a pseudopolynomial-time algorithm by splitting jobs into unit-processing time jobs and replacing every job by a set of unit-time jobs with cardinality equal to the processing time of the job. We propose a polynomial-time algorithm for this variant of the problem.  Then, in Section~\ref{sec:arbitrary_pj}, we study the case of scheduling a set of jobs when $K$ different types of calibrations are available. Each calibration of type $k$ is associated with a length of $T_k$ and a cost $f_k$. The objective is to find a feasible schedule minimizing the total calibration cost. We show that the problem with arbitrary processing times is NP-hard, even when the preemption of the jobs is allowed.

We study the case of unit-time jobs in Section~\ref{sec:act_unit} and propose a polynomial-time algorithm based on dynamic programming. We present an algorithm for a more general setting where calibrations are not instantaneous and require $\lambda$ units of time during which the machine cannot be used. We refer to this period as the \emph{activation length}.


\section{Arbitrary processing times and preemption}\label{sec:warm}

We suppose here that the jobs have arbitrary processing times and that the preemption of the jobs is allowed. An obvious approach to obtain an optimal preemptive schedule is to divide each job $j$ into $p_j$ unit-time jobs with the same release time and deadline as job $j$ and then apply the Lazy Binning (LB) algorithm of~\cite{bender2013efficient} that optimally solves the problem for instances with unit-time jobs. However,  this approach leads to a pseudopolynomial-time algorithm. Here, we propose a more efficient way for the problem. Our method is based on the idea of LB. In the sequel, we suppose without loss of generality that jobs are sorted in non-decreasing order of their deadline, $d_1\leq d_2 \leq \ldots \leq d_n$. Before introducing our algorithm, we briefly recall LB:
at each iteration, a time $t$ (initially 0) is fixed and the (remaining) jobs are scheduled, starting at time $t+1$ using the Earliest Deadline First {\sc (edf)} policy \footnote{In the {\sc edf} policy, at any time $t$, the available jobs are scheduled in order of non-decreasing deadlines.}. 
If a feasible schedule exists (for the remaining jobs), $t$ is updated to $t+1$; otherwise, the next calibration is set to start at time $t$, which is called the current {\em latest-starting-time} of the calibration. Then, the jobs that are scheduled during this calibration interval are removed, and this process is iterated after updating $t$ to $t+T$, where $T$ is the calibration length. The polynomiality of the algorithm for unit-time jobs comes from the observation that the starting time of any calibration is at a distance of no more than  $n$ time-units before any deadline.
In our case, however, i.e., when the jobs have arbitrary processing times, a calibration may start at a distance of at most  $P=\sum_{j=1}^n p_j$ time-units before any deadline.

\begin{definition}\label{def:psi}
Let $\Psi := \bigcup_i \{ d_i-P,d_i-P+1,\ldots, d_i \}$ where $P=\sum_{j=1}^n p_j$.
\end{definition}

\begin{prop}\label{prop:position_calibration}
There exists an optimal solution in which each calibration starts at a time in $\Psi$.
\end{prop}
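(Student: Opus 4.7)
The plan is to take any optimal schedule and transform it, keeping the number of calibrations unchanged, into an optimal schedule whose calibrations all start in $\Psi$. Let $S^\star$ be optimal, with its $k$ calibrations ordered by start time $s_1<s_2<\dots<s_k$. I would process them from right to left and, for the $i$-th calibration, shift it together with all the processing it carries to the right by the largest amount $\Delta_i\ge 0$ such that (i) no job deadline is violated and (ii) no processing fragment of this calibration collides, time-wise, with a fragment already placed in a calibration $j>i$. Right-shifts never endanger release times, so the sweep is well defined and yields a feasible schedule of the same cost.

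After the sweep, for each calibration one of two blocking reasons applies: either (a) some job $j^\star$ whose last processing lies inside the $i$-th calibration has $C_{j^\star}=d_{j^\star}$, i.e.\ a deadline has become tight; or (b) the rightmost processing fragment in the $i$-th calibration is packed (no time-gap) against a fragment in a later calibration. In case (b), I would trace the packed chain rightward through the already-shifted calibrations; by extremality of the earlier shifts, the chain must terminate at a fragment whose completion coincides with a deadline $d_{j^\star}$, for otherwise its last fragment could still be displaced. In both cases one exhibits a tight job $j^\star$ reachable from $s_i$ along a contiguous succession of processing fragments.

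The key observation is then that no idle time can remain inside the interval $[s_i,d_{j^\star}]$ along this route: any such idle unit could be absorbed by shifting the $i$-th calibration (and, if needed, the appropriate prefix of the packed chain) slightly further to the right, contradicting the extremality of $\Delta_i$. Therefore $d_{j^\star}-s_i$ equals the total processing carried out in $[s_i,d_{j^\star}]$, which is at most $P=\sum_j p_j$, while the trivial inequality $s_i\le d_{j^\star}$ holds because a job completing at $d_{j^\star}$ must start no earlier than $s_i$. We conclude that $s_i\in\{d_{j^\star}-P,\dots,d_{j^\star}\}\subseteq\Psi$ for every calibration. The main obstacle will be to make the ``no idle'' claim fully rigorous in the presence of preemption and release times: one has to check that every potential idle unit is truly absorbable by an admissible right-shift, and to keep careful track of which fragment of each preempted job lives in which calibration.
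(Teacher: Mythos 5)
Your overall strategy -- delay calibrations to the right until each one is pinned by a deadline, then bound the distance to the pinning deadline by $P$ -- is the same family of argument as the paper's, which repeatedly takes the leftmost calibration starting outside $\Psi$ and pushes its maximal back-to-back block of calibrations (together with the contiguous block of scheduled unit fragments) one unit to the right. However, as written your proof has a gap at its central step, and it is exactly the one you flag: the ``no idle'' claim does not follow from the extremality of your shifts $\Delta_i$, because your sweep moves each calibration \emph{rigidly} together with all the processing it carries. Concretely, suppose calibration $i$ contains a fragment $A$ in $[s_i,s_i+1)$, an idle slot in $[s_i+1,s_i+2)$ forced by a release date, and a fragment $B$ in $[s_i+2,s_i+3)$ with $d_B=s_i+3$ tight. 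Then $\Delta_i=0$ (any rigid shift pushes $B$ past $d_B$), your case (a) applies with $j^\star=B$, and yet there is idle time strictly inside $[s_i,d_{j^\star}]$; absorbing it requires shifting the calibration interval together with only a \emph{prefix} of its contents while leaving $B$ fixed, which is not an operation your sweep performs and whose feasibility (the shifted prefix must stay inside the shifted calibration window, must not be deadline-pinned itself, etc.) is precisely what remains to be proved. Until that is done, you cannot conclude $d_{j^\star}-s_i\le P$, only $d_{j^\star}-s_i\le T$ plus idle, which is not enough.

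The paper sidesteps this by arguing in the contrapositive direction, which you may find cleaner to adopt: after splitting jobs into unit fragments, if a calibration starts at $t'\notin\Psi$ then \emph{every} deadline exceeding $t'$ exceeds it by more than $P$; since the contiguous block of scheduled fragments through $t'$ has total length at most $P$, no fragment in that block can be completing at its deadline, so the block of back-to-back calibrations and the block of fragments can all be pushed right by one unit, and one iterates. This makes ``not in $\Psi$ $\Rightarrow$ still pushable'' immediate, whereas your formulation needs ``maximally pushed $\Rightarrow$ no idle up to a tight deadline,'' which is the harder implication to make rigorous. Your plan is completable, but the extremality of the rigid sweep alone does not deliver it; you need either the refined (non-rigid) shift operation with the attendant bookkeeping, or the paper's distance-from-deadline observation.
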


\begin{proof}
Let $\sigma$ be an optimal solution in which there is at least one calibration that does not start at a time in $\Psi$.
We show how to transform the schedule $\sigma$ into another optimal
solution that satisfies the statement of the proposition without increasing the cost.
See Fig.~\ref{fig:push_calibrations} for an illustration of the transformation.
Let $c_{i'}$ be the first calibration of $\sigma$  that starts at
time $t' \notin \Psi$.
Let $c_{i'},\ldots, c_i$ be the maximum set of consecutive calibrations, i.e.
when a calibration finishes, another starts immediately.
We denote by $c_{i+1}$ the next calibration that is not adjacent to calibration $c_i$.
\begin{figure*}[ht]
\centering
\includegraphics[scale=1.1]{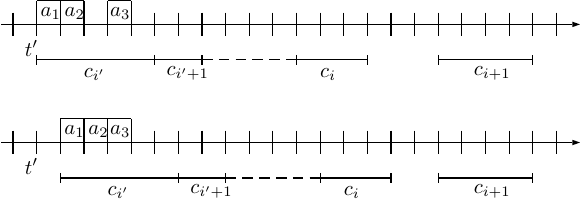}
\caption{Illustration of Proposition~\ref{prop:position_calibration}. The first schedule is an optimal schedule. The second one is obtained after delaying the continuous 
block of calibrations $c_{i'},\ldots, c_i$.}
\label{fig:push_calibrations}
\end{figure*}
We can delay the set of calibrations $c_{i'},\ldots, c_i$ until:
\begin{itemize}
\item either we reach the next calibration $c_{i+1}$,
\item or $c_{i'}$ starts at a time in $\Psi$.
\end{itemize}
Note that this procedure is always possible.
Indeed, since $c_{i'}$ starts at a time that is in a distance more than $P$ from a deadline, it is always possible to delay the scheduled jobs while keeping the feasibility of the schedule. 
In particular, if there are no jobs scheduled when calibration $c_{i'}$ starts, then the execution of jobs can remain unchanged.
Otherwise, there is at least one job scheduled when 
calibration $c_{i'}$ starts. Let  $a_1,\ldots,a_e$ be the continuous block of jobs.
Since the starting time of job $a_1$ is at a distance  (to the left-hand side) more than 
$P$ from a deadline, then all  jobs
can be delayed by one time-unit since no job of this block finishes at its deadline.
Note that after this modification, jobs can be assigned to another calibration.
We repeat the above transformation until we get a schedule satisfying the
statement of the proposition.

\qed\end{proof}

We propose the following algorithm whose idea is based on the LB algorithm:  we first compute the current latest-starting-time
of the calibration
such that no job misses its deadline (this avoids to consider every value in $\Psi$).
The starting time of the calibration depends on some deadline $d_k$. At each iteration, among the remaining jobs, we compute for every deadline the sum of the processing times of all these jobs (or of their remaining parts) having a smaller than or equal deadline, and we subtract it from the current deadline. The current latest-starting-time of the calibration is obtained by choosing the smallest computed value.
Once the starting time of the calibration is
set, we schedule the remaining jobs in the {\sc edf} order until reaching
$d_k$ and we continue to schedule the available jobs until the
calibration interval finishes.
In the next step, we update the processing time of the jobs that have been
processed. We repeat this procedure until all jobs are fully processed. A formal description of the algorithm that we call the Preemptive Lazy Binning (PLB) algorithm is given below (Algorithm~\ref{algo:calibration}).

\begin{algorithm}[thbp]
\begin{algorithmic}[1] 
\STATE Jobs in $\mathcal{J}$ are sorted in non-decreasing order of deadline
\WHILE{$\mathcal{J}\neq \emptyset$}
	\STATE $t \gets \max_{i\in \mathcal{J}}d_i$, $k\gets 0$ // the current latest-starting-time of the calibration
	\FOR{$i\in \mathcal{J}$}
		\IF{$t>d_i-\sum_{j\leq i,j\in \mathcal{J}}p_j$}
			\STATE $t\gets d_i-\sum_{j\leq i,j\in \mathcal{J}}p_j$
			\STATE $k \gets i$
		\ENDIF
	\ENDFOR
	\STATE $u\gets t+ \left\lceil\frac{d_k-t}{T}\right\rceil \times T$
	\STATE Perform calibrations at time $t, t+T, t+2T, \ldots, u-T$.
	\STATE Schedule jobs $\{ j\leq k ~|~j\in \mathcal{J}  \}$ from $t$ to $d_k$ by applying the {\sc edf} policy and remove them from $\mathcal{J}$.
	\STATE Schedule jobs $k+1,\ldots, n$ in $[d_k,u)$ in {\sc edf} order.
	\STATE Let $q_j$ for $j=k+1,\ldots, n$ be the processed quantity in $[d_k,u)$.
	\STATE //Update processing time of jobs
	\FOR{$i=k+1,\ldots, n$}
		\STATE $p_i \gets p_i - q_i$
		\IF{$p_i=0$}
			\STATE $\mathcal{J}\gets \mathcal{J}\setminus i$
		\ENDIF
	\ENDFOR
\ENDWHILE
\end{algorithmic}
\caption{Preemptive Lazy Binning (PLB)}
\label{algo:calibration}
\end{algorithm}

We can prove the optimality of this algorithm using a similar analysis as for the Lazy Binning algorithm in~\cite{bender2013efficient}.

\begin{prop}\label{prop:algo_latest_starting_time}
The schedule returned by Algorithm~PLB is a feasible schedule in
which the starting time of each calibration is the latest possible.
\end{prop}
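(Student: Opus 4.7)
The plan is to prove feasibility of the output and maximality of each calibration's starting time, with the latter by induction on the iterations of the while-loop.

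\textbf{Feasibility.} Fix one iteration, let $\mathcal{J}$ denote the current job set, and let $t$ and $k$ be the values produced by the for-loop. By construction, $t = d_k - \sum_{j\le k,\,j\in\mathcal{J}} p_j$ and $t \le d_i - \sum_{j\le i,\,j\in\mathcal{J}} p_j$ for every $i\in\mathcal{J}$. Scheduling the jobs of index $\le k$ in EDF order on $[t,d_k]$ fills that interval exactly (its length equals the total remaining workload of those jobs) and meets every deadline $d_i$ with $i\le k$ by the standard EDF argument applied to the above inequalities. On $[d_k,u)$, fragments of the other remaining jobs are then scheduled in EDF order, and the leftover work is pushed to the next iteration. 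I would then check that after the processing-time update the analogous family of inequalities still holds for the residual instance, so that the next iteration is well defined.

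\textbf{Maximality.} I would proceed by induction on the iteration index. For the first calibration, suppose for contradiction that some feasible schedule starts its first calibration at a time $t^* > t$. Since no job can execute before a calibration begins, the entire workload $\sum_{j\le i} p_j$ of jobs with deadline at most $d_i$ must fit within $[t^*,d_i]$, yielding $t^*\le d_i - \sum_{j\le i} p_j$ for every $i$. Taking the minimum over $i$ gives $t^*\le t$, a contradiction. For the inductive step, once the positions of the first $m$ calibrations are fixed at their latest possible values, any feasible continuation agrees with PLB on the residual job set and residual processing times; applying the base-case reasoning to that residual instance yields maximality for the $(m+1)$-st calibration.

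The main obstacle will be the half-open interval $[d_k,u)$ inside a block, in which PLB commits EDF fragments of jobs whose deadlines strictly exceed $d_k$. I must rule out that a different choice of fragmentation inside this interval could enable a strictly later subsequent calibration. I would resolve this by an exchange argument: any feasible way of filling $[d_k,u)$ from jobs of deadline greater than $d_k$ consumes exactly $u-d_k$ units of work in total, so the residual workload seen by the next iteration is invariant under reassignment, and the expression $\min_i\bigl(d_i - \sum_{j\le i} p_j\bigr)$ characterizing the next latest-starting-time is insensitive to the specific EDF choice PLB makes inside the block.
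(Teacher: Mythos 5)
Your overall strategy differs from the paper's: for maximality of the first calibration you give a direct volume bound ($t^*\le d_i-\sum_{j\le i}p_j$ for every $i$ because nothing runs before the first calibration), whereas the paper argues by exchange, showing that any calibration not starting at one of the candidate dates $d_i-\sum_{j\le i,j\in\mathcal{J}}p_j$ leaves idle time before some deadline and can therefore be delayed (reusing the block-shifting argument of its Proposition 1). Your base case is clean and, if anything, more self-contained than the paper's. The difficulty you flag about the interval $[d_k,u)$ is real, and the paper essentially glosses over it; identifying it is to your credit.

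However, your resolution of that difficulty contains a genuine error. You claim that because every way of filling $[d_k,u)$ consumes exactly $u-d_k$ units of work, ``the residual workload seen by the next iteration is invariant under reassignment'' and hence $\min_i\bigl(d_i-\sum_{j\le i}p_j\bigr)$ is insensitive to the choice of fragments. Only the \emph{total} residual workload is invariant; the per-job (and hence per-deadline-prefix) residual workloads are not. If a competing feasible schedule fills $[d_k,u)$ with fragments of large-deadline jobs while PLB uses EDF and fills it with fragments of small-deadline jobs, the two residual instances have different prefix sums $\sum_{j\le i}p_j$, and the expression $\min_i\bigl(d_i-\sum_{j\le i}p_j\bigr)$ genuinely changes. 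What saves the induction is not invariance but \emph{dominance}: EDF, by always preferring the earliest remaining deadline, maximizes for every $i$ simultaneously the amount of deadline-$\le d_i$ work completed by time $u$, hence minimizes every residual prefix sum, hence maximizes the next latest-starting-time over all feasible ways of filling the block (including leaving part of it idle). You need to state and prove this dominance lemma (a standard EDF exchange argument) in place of the invariance claim; as written, that step would fail. Relatedly, the assertion that ``any feasible continuation agrees with PLB on the residual job set and residual processing times'' is false for the same reason and should be weakened to the dominance comparison.
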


\begin{proof}
We show that the condition at line 5 in Algorithm~PLB ensures 
that we always obtain a feasible schedule.

We compute the latest-starting-time at each step, and this time is exactly the latest time of the first calibration. By setting a deadline $d_i$, we know that jobs that have a deadline earlier than $d_i$ have to be scheduled before $d_i$, while the other jobs should be scheduled after $d_i$. When we update $t$ for every deadline $d_i$ in the algorithm, we assume that there is no idle time between $d_i-\sum_{j\leq i,j\in \mathcal{J}}p_j$ (i.e. the latest starting time) and $d_i$. Note that if $d_i-\sum_{j\leq i,j\in \mathcal{J}}p_j<0$, then the schedule is not feasible. For the sake of contradiction, suppose that a feasible schedule exists in which some calibration is not started at a time computed by the algorithm. We will show that the starting time of this calibration is not the latest one. Denote this time by $t'$.
Since there is no $i$ for which the starting time of the calibration is  $d_i -\sum_{j\leq i,j\in \mathcal{J}}p_j $, then there is at least one unit of idle time between the starting time of the calibration and some deadline $d_i$. Hence, it is possible to delay all calibrations starting at $t'$ or later, as well as the jobs that were scheduled in $[t',d_i)$ while keeping the {\sc edf} order. This can be done similarly as in the proof of Proposition~\ref{prop:position_calibration}.
\qed\end{proof}

\begin{prop}
Algorithm~$PLB$ is optimal.
\end{prop}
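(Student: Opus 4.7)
The plan is to prove optimality by induction on the number of iterations of the outer while-loop of Algorithm~PLB, supported by an exchange argument grounded in the right-pushing transformation used in the proof of Proposition~\ref{prop:position_calibration}. I denote by $\sigma$ the schedule produced by PLB and by $\sigma^*$ any feasible schedule; the goal is to show $|\sigma|\leq|\sigma^*|$.

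First, by Proposition~\ref{prop:position_calibration} I may assume that every calibration in $\sigma^*$ starts at a time of $\Psi$. Iterating the right-pushing transformation, I may further assume that every calibration in $\sigma^*$ starts as late as possible. Combined with Proposition~\ref{prop:algo_latest_starting_time}, this forces the first calibration of $\sigma^*$ to start at the very same time $t_1=d_k-\sum_{j\leq k,\, j\in\mathcal{J}} p_j$ as the first calibration of $\sigma$, where $k$ is the tightening index chosen by PLB in its first iteration.

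Next, I argue that the first block of consecutive calibrations of $\sigma$ and of $\sigma^*$ must be identical. Since no processing occurs in $\sigma^*$ before $t_1$, and the jobs with deadline at most $d_k$ require exactly $d_k-t_1$ units of processing time, these jobs fully occupy $[t_1,d_k)$ in $\sigma^*$. Consequently the machine is continuously calibrated on $[t_1,d_k)$ in $\sigma^*$, which forces at least $m:=\lceil(d_k-t_1)/T\rceil$ calibrations, matching PLB's count for the first block. A further right-pushing step applied within this first block lets me assume these $m$ calibrations are contiguous and therefore cover $[t_1,u_1)$ with $u_1=t_1+mT$, as in PLB. Since preemption is allowed, I can then reorder the processing within $[d_k,u_1)$ in $\sigma^*$ in EDF order without changing its calibration count, so that the residual workload after the first block coincides with the residual workload of $\sigma$. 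Applying the induction hypothesis to this common residual instance yields $|\sigma|\leq|\sigma^*|$.

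The main obstacle is the second right-pushing step: once the first calibration of $\sigma^*$ has been pinned at $t_1$, one must argue that each subsequent calibration of the same block can also be compacted against its predecessor without breaking feasibility for the jobs with deadline at most $d_k$. This follows because those jobs are already forced to fill $[t_1,d_k)$ entirely, leaving no internal idle slack that could justify a non-contiguous arrangement; any remaining slack sits to the right of $d_k$ and can be absorbed by the EDF reordering used to align the residual instance.
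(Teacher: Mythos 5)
Your overall strategy (induction on the blocks of calibrations, pinning the first calibration of an arbitrary feasible schedule at PLB's latest-starting-time $t_1$, and recursing on the residual) is a legitimate direct route, and it is genuinely different from the paper's: the paper proves optimality by showing that PLB produces the \emph{same schedule} as Lazy Binning applied to the instance in which every job is split into $p_j$ unit jobs, and then inherits optimality from Bender et al.'s result for LB. Your route avoids that black box, but as written it has two concrete gaps.

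First, the step ``iterating the right-pushing transformation, every calibration of $\sigma^*$ starts as late as possible, hence its first calibration starts at $t_1$'' is not delivered by the tools you cite. Proposition~\ref{prop:position_calibration} only pushes a calibration until it lands somewhere in the (large) set $\Psi$, and Proposition~\ref{prop:algo_latest_starting_time} is a statement about the schedule PLB outputs, not about an arbitrary feasible $\sigma^*$. The pushing process can get stuck at some $t'<t_1$ when a job of the leading contiguous block finishes at its deadline $d_a$; to conclude $t'=d_a-\sum_{j\le a}p_j\ge t_1$ you need the slots in $[t',d_a)$ to be occupied exactly by jobs of deadline at most $d_a$, which requires first reordering $\sigma^*$ into {\sc edf} order within its calibrated slots and then arguing the push never stalls early. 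You would need to state and prove this as a lemma; it is the real content of the induction step.

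Second, the claim that after {\sc edf} reordering ``the residual workload after the first block coincides with the residual workload of $\sigma$'' is false in general: $\sigma^*$ may be idle during part of $[d_k,u_1)$, or may place additional calibrations overlapping $[t_1,u_1)$ or straddling $u_1$, so its residual workload is only \emph{at least} PLB's (pointwise per job), and its remaining calibrations need not form a schedule of a clean subproblem starting at $u_1$. To close the induction you need (i) an explicit monotonicity statement (an instance with pointwise smaller remaining processing times and the same release dates/deadlines needs no more calibrations) and (ii) an accounting argument showing that the calibrations of $\sigma^*$ not charged to the first block still yield a feasible schedule of PLB's residual instance. Neither is present. Both gaps are fixable, but as it stands the argument does not go through; the paper's comparison with LB sidesteps all of this at the price of invoking LB's optimality.
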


\begin{proof}
It is sufficient to prove that Algorithm~$PLB$ returns the same schedule as $LB$ after splitting all jobs to unit-time jobs. We denote respectively $PLB$ and $LB$ the schedules returned by these algorithms.

Let $t'$ be the first time at which the two schedules differ. The jobs executed before $t'$ are the same in both schedules. Hence, the remaining jobs are the same after $t'$. Two cases may occur:
\begin{itemize}
\item a job is scheduled in $[t',t'+1)$ in $PLB$ but not in $LB$.
This means that the machine is not calibrated at this time slot in the schedule produced by $LB$.
Since the calibrations are the same before $t'$ in both schedules, then a calibration starting at $t'$ is necessary for $PLB$. According to Proposition~\ref{prop:algo_latest_starting_time}, we have a contradiction to the fact that we were looking for the latest-starting-time of the calibration.
\item a job is scheduled in $[t',t'+1)$ in $LB$ but not in $PLB$. This means that there does not exist a feasible schedule starting at $t'+1$ with the remaining jobs. Hence, $PLB$ is not feasible. This case is not possible from Proposition~\ref{prop:algo_latest_starting_time}.
\end{itemize}
\qed\end{proof}

\begin{prop}
Algorithm~$PLB$ has a running time of $O(n^2)$.
\end{prop}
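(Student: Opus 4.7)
The plan is to decompose the runtime into cost-per-outer-iteration multiplied by the number of outer iterations. Sorting the jobs by deadline once at line~1 costs $O(n\log n)$ and will be absorbed by the rest. After this sort, I would argue that every iteration of the \textbf{while} loop performs only $O(n)$ work: the \textbf{for} loop of lines~4--9 scans the remaining jobs of $\mathcal{J}$ exactly once while maintaining the prefix sum $\sum_{j\le i,\,j\in\mathcal{J}} p_j$ incrementally, so it runs in $O(n)$; line~10 assigns jobs in EDF order, which incurs no extra sorting cost since the deadline ordering is already fixed; line~13 likewise scans the jobs with index larger than $k$ in the same precomputed order to pack their fragments into $[d_k,u)$; and the update loop of lines~15--20 is one more linear pass.

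The main thing to establish is therefore that the \textbf{while} loop executes at most $n$ times, which I would obtain by showing that each iteration removes at least one job from $\mathcal{J}$. At the start of an iteration, $t$ is initialized to $\max_{i\in\mathcal{J}} d_i$. For the job $i^\star$ realizing this maximum, the test on line~5 reads $t = d_{i^\star} > d_{i^\star} - \sum_{j\le i^\star,\,j\in\mathcal{J}} p_j$, so $k$ is updated at least once and the \textbf{for} loop terminates with $k\ge 1$. By construction the final value of $t$ satisfies $t = d_k - \sum_{j\le k,\,j\in\mathcal{J}} p_j$, hence the interval $[t,d_k)$ has length exactly equal to the total processing time of $\{j\le k : j\in\mathcal{J}\}$ and EDF completes all of these jobs inside it. Line~10 then removes at least $k\ge 1$ jobs from $\mathcal{J}$, so there can be at most $n$ outer iterations.

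Multiplying the two bounds yields the claimed $O(n^2)$ complexity, with the initial sort absorbed. I do not expect a serious obstacle: the only point that merits a sentence of care is making sure that the EDF dispatch in lines~10 and~13 does not silently introduce a logarithmic factor, which is avoided by reusing the one-time deadline order established at line~1 rather than re-sorting at every iteration.
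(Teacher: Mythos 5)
Your proposal is correct and follows essentially the same decomposition as the paper's proof: an $O(n\log n)$ initial sort, $O(n)$ work per iteration of the \textbf{while} loop, and at most $n$ iterations because each one schedules and removes at least one job. Your justification that $k\ge 1$ holds at the end of the \textbf{for} loop (via the job attaining $\max_i d_i$ and the positivity of processing times) is a more careful spelling-out of the paper's one-line claim that "at each step, we schedule at least one job," but it is the same argument.
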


\begin{proof}
We first sort jobs in the non-decreasing order of their deadlines in $O(n\log n)$ time. At each step, we compute the first latest-starting-time of the calibration in $O(n)$ time. Then the scheduling of jobs in the {\sc edf} order takes $O(n)$ time. We also need to update the processing times of the jobs whose execution has been started. This can be done in $O(n)$ time. At each step, we schedule at least one job. Hence, there are at most $n$ steps.
\qed\end{proof}

\section{Arbitrary processing times, preemption and many calibration types}
\label{sec:arbitrary_pj}
In this section, we consider a generalization of the model of~\cite{bender2013efficient} in which there is more than one type of calibration. Every calibration type is associated with a length of $T_i$ and a cost $f_i$. We are also given a set of jobs, each one characterized by its processing time $p_j$, its release time $r_j$, and its deadline $d_j$. Each job can only be scheduled when the machine is calibrated regardless of the calibration type. Our objective is to find a feasible preemptive schedule minimizing the total calibration cost. We prove that the problem is NP-hard.

\begin{prop}    \label{prop_np_hard}
The problem of minimizing the calibration cost is NP-hard for jobs with arbitrary processing times and many types of calibration,
even when the preemption is allowed.
\end{prop}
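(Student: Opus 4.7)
The plan is to reduce from the (weakly) NP-hard \textsc{Change-Making} problem: given positive integers $c_1,\ldots,c_n$ and a target amount $A$, find nonnegative integers $x_1,\ldots,x_n$ minimizing $\sum_i x_i$ subject to $\sum_i x_i c_i = A$.

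From a \textsc{Change-Making} instance I would construct a calibration scheduling instance with two tight jobs and $n+1$ calibration types. The jobs are $J_1$ with $p_1=A$, $r_1=0$, $d_1=A$ and $J_2$ with $p_2=\tfrac12$, $r_2=A$, $d_2=A+\tfrac12$. The calibration types are: for each $i\in\{1,\ldots,n\}$, type $i$ of length $c_i$ and cost $1$; plus an auxiliary type $n+1$ of length $\tfrac12$ and large cost $M$ (for instance $M>n$).

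Since $J_1$ and $J_2$ together require the machine to be calibrated at every instant of $[0,A+\tfrac12]$, and since an optimal schedule may be assumed to use pairwise non-overlapping calibrations (any overlap can be removed by shifting a calibration to the right without losing coverage), any optimal solution must tile $[0,A+\tfrac12]$ exactly. Writing $m$ for the number of auxiliary-type calibrations used, the total-length equation $(\text{integer part})+\tfrac{m}{2}=A+\tfrac12$ forces $m$ to be odd, and the large cost $M$ forces $m=1$ at the optimum. Consequently, the integer-length calibrations (drawn from types $1,\ldots,n$) sum to exactly $A$, and minimizing their number is precisely \textsc{Change-Making} for $(c_1,\ldots,c_n;A)$. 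Hence the scheduling optimum equals the \textsc{Change-Making} optimum plus $M$, which yields the reduction.

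The main technical point is verifying that preemption cannot be exploited to evade this rigidity. This is immediate from the tightness $d_j-r_j=p_j$ of the two jobs: no matter how $J_1$ and $J_2$ are preempted, the machine must be calibrated at every instant of $[0,A+\tfrac12]$, so preemption has no effect on the structure of the optimal calibration schedule and the reduction goes through even in the preemptive setting. If a strongly NP-hard result were desired instead, one would replace the base problem by \textsc{3-Partition} and blow up the construction accordingly, but for the present claim the above weak reduction suffices.
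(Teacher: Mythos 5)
Your reduction has a genuine gap at the step ``any optimal solution must tile $[0,A+\tfrac12]$ exactly.'' Feasibility only requires that the union of the calibration intervals \emph{contain} $[0,A+\tfrac12]$; nothing in the model forbids the last calibration from extending past $A+\tfrac12$, and in your instance such overshoot is free because each calibration of type $1,\ldots,n$ costs $1$ regardless of its length. Hence a feasible solution is exactly a choice of multiplicities $x_1,\ldots,x_n,m\geq 0$ with $\sum_i x_i c_i+\tfrac{m}{2}\geq A+\tfrac12$, and the optimum will simply take $m=0$ and $\lceil (A+1)/\max_i c_i\rceil$ copies of the longest calibration: the ``covering'' relaxation of \textsc{Change-Making} with unit coin costs is solvable in polynomial time by repeating the largest denomination. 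Your auxiliary half-length, cost-$M$ type is never used, the parity argument forcing $m$ odd never applies (it presupposed the equality $\sum_i x_i c_i+\tfrac m2=A+\tfrac12$), and the constructed instances are all easy, so no hardness is transferred. (The half-integer data is a separate, minor issue, fixable by scaling; the preemption remark is fine.)

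The paper's proof avoids exactly this trap. It reduces from \textsc{Unbounded Subset Sum} and sets each calibration's \emph{cost equal to its length} ($f_j=\ell_j=\kappa_j$), with jobs of total processing time $V$ filling the window $[0,V)$. Then the total cost equals the total calibrated length, which is at least $V$ by the coverage requirement, with equality if and only if the chosen lengths sum to exactly $V$ --- so overshoot is automatically penalized and exact tiling is forced at cost $V$. If you want to keep your \textsc{Change-Making} flavor, you would need some analogous device that charges for calibrated time beyond what is needed; with unit costs per calibration there is none.
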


In order to prove the NP-hardness, we use a reduction from the {\sc Subset Sum} problem which is NP-hard (\cite{johnson1979computers}). In an instance of the {\sc Subset Sum} problem, we are given a set of $n$ items where
each item $j$ is associated with a value $\kappa_j$. We are also given a value $V$. We aim to find a subset of the items that can be summed to $V$ under the assumption that each item may be used once. However, in our proof, we suppose that each item can be used several times, but at most $V/\kappa_j$ times.

\begin{proof}
Let $\Pi$ be the preemptive scheduling problem of minimizing the total calibration cost for a set of $n$ jobs that have arbitrary processing times in the presence of a set of $K$ calibration types.

Given an instance of the {\sc Subset Sum} problem, we construct an instance of the problem $\Pi$ as follows. For each item $j$, we create a calibration length $T_j=\kappa_j$ and of cost $f_j=\kappa_j$. Moreover, we create one job of processing time $V$ that is released at time 0, and its deadline is $V$. We assume that each calibration can be used several times, i.e., each calibration type is duplicated as many as needed.

We claim that the instance of the {\sc Subset Sum} problem is feasible if and only if there is a feasible schedule for problem $\Pi$ of cost $V$.

Assume that the instance of the {\sc Subset Sum} problem is feasible. Therefore, there exists a subset of items $C'$ such that $\sum_{j\in C'}\kappa_j=V$. As mentioned previously, the same item may appear several times. Then we can schedule the unique job, and calibrate the machine according to the items in $C'$ in any order. Since the calibrations allow the job to be scheduled in $[0, V)$, then we get a feasible schedule of cost $V$ for $\Pi$.

In the opposite direction of our claim, assume that there is a feasible schedule for problem $\Pi$ of cost $V$. Let $\mathcal{C}$ be the set of calibrations that have been used in the schedule. Then $\sum_{j\in \mathcal{C}}T_j=V$. Therefore, the items which correspond to the calibrations in $\mathcal{C}$ form a feasible solution for the {\sc Subset Sum} problem. 
\qed\end{proof}

\section{Unit-time jobs, many calibration types and activation length}
\label{sec:act_unit}

We showed previously that the problem is NP-hard when many calibration types are considered even in the case where the calibrations are instantaneous. In this section, we investigate unit-time jobs, and the calibrations are not instantaneous anymore. Every calibration has an activation length $\lambda$ during which the machine cannot process jobs. For feasibility reasons, we allow recalibrating the machine at any point, even when it is already calibrated. However, it is not allowed to calibrate the machine when a job is running. As an example, consider the instance given in Figure~\ref{fig:infeasible}.  The machine has to be calibrated at time 0 and requires $\lambda=3$ units of time for being available for the execution of jobs. At the time $3$, the machine is ready to execute job 1, and it remains calibrated for $T=4$ time units. If we cannot recalibrate an already calibrated machine, then the earliest time at which we can start calibrating the machine is $7$. This would lead to the impossibility of executing job $2$. However, a recalibration at time $4$ would lead to a feasible schedule.
\begin{figure}[ht]
\centering
\includegraphics[scale=1.1]{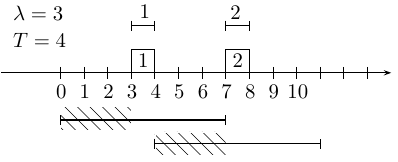}
\caption{An infeasible instance if we cannot recalibrate at any time. We have a single machine, two unit-time jobs, and a unique type of calibration of length $T=4$. The activation length, i.e., the duration that is required for the calibration to be valid, is $\lambda=3$, which is represented by hatched lines in the figure. Job 1 is released at time 3, and its deadline is 4. Job 2 is released at time 7, and its deadline is 8.}
\label{fig:infeasible}
\end{figure}

It is easy to see that the introduction of the activation length into the model makes necessary the extension of the set $\Psi$ of relevant times that we have used in Section~\ref{sec:warm} (Definition~\ref{def:psi}). Indeed, we observe that jobs can be scheduled at a distance larger than $n$ from a release time or a deadline. 
In the worst case, we have to calibrate $n$ times to be able to schedule $n$ jobs. Thus the calibration can start at a time at most $n(\lambda +1)$ time units before a deadline. However, we show in the following that it is not necessary to consider every time in $[d_i-n(\lambda +1),d_i]$ for a fixed $i$ and we define a time set of size polynomial in $n$.


\begin{definition}
Let $\Theta := \bigcup_i \{ d_i-j\lambda-h,~j=0,\ldots, n, ~h=0,\ldots,n \}$.
\end{definition}

Note that this set of relevant times does not depend on the length of the calibration or the activation duration.

\begin{prop}\label{prop:act_position_calibration}
There exists an optimal solution in which each calibration starts at a time in $\Theta$.
\end{prop}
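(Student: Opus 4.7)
The plan is to mimic the block-push argument used in Proposition~\ref{prop:position_calibration}, but enlarged to account for the activation periods of length $\lambda$. Let $\sigma$ be any optimal schedule and suppose that some calibration starts at a time outside $\Theta$; pick the leftmost such calibration $c$ and let $t$ be its starting time. I would form the maximal contiguous busy block $B$ whose leftmost point is $t$, namely the longest interval $[t,e)$ in which every time unit is occupied by either an activation period or the execution of a unit-time job (no idle time).

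The core step is a rigid rightward shift of $B$: move every calibration start and every scheduled job that lies in $B$ one time unit later, preserving all internal offsets so that calibration lengths, activation periods, job-to-calibration memberships and the ordering of events are untouched. A rightward shift never violates a release date, and the prefix of $\sigma$ strictly before $t$ is left unchanged (since $c$ was leftmost-offending). The shift is legal until either (i) the right end of $B$ reaches the deadline $d_i$ of the job that currently finishes last in $B$, or (ii) the right end of $B$ meets an event of $\sigma$ lying outside $B$. Whenever (ii) happens, I absorb that event into $B$ and keep pushing; this strictly enlarges $B$, and since $B$ is bounded on the right by the maximum deadline, the process terminates with (i).

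When the shift is finally blocked by (i), $e = d_i$ and the right boundary of $B$ sits on a deadline. Let $j$ be the number of calibrations whose activation periods lie in $B$ and $h$ the number of unit-time jobs scheduled in $B$. Because $B$ is gap-free, its length decomposes as $e - t = j\lambda + h$, whence $t = d_i - j\lambda - h$. In an optimal schedule every calibration covers at least one job (otherwise deleting it would strictly lower the cost), so $j \leq n$; trivially $h \leq n$. Therefore $t \in \Theta$, contradicting the choice of $c$. Iterating the transformation from left to right over all remaining offending calibrations yields an optimal schedule meeting the statement.

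The main obstacle I foresee is the clean length decomposition $e - t = j\lambda + h$ when $\sigma$ contains overlapping calibrations, since the problem explicitly permits recalibrating an already-calibrated machine. I would dispose of this by first observing that in any optimal $\sigma$ one may assume activation periods do not mutually overlap: two activations running simultaneously during the same $\lambda$-interval could be collapsed to a single calibration without increasing the cost, so w.l.o.g.\ every time unit of $B$ is used either for exactly one activation or for exactly one job, which is precisely what the bookkeeping $j\lambda + h$ requires.
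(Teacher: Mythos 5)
Your proposal follows essentially the same route as the paper's proof: delay the offending calibration(s) rightward until blocked by a job reaching its deadline or by another calibration, and then express the starting time as $d_i-j\lambda-h$ by counting the activation periods and unit jobs packed without idle time against that deadline. The only real differences are organizational---you push a single gap-free block anchored at the \emph{leftmost} offender and decompose its length in one shot, whereas the paper treats the \emph{last} offender and recurses from the right using the fact that the next calibration already starts in $\Theta$---and your side remark on overlapping activations is stated imprecisely (two overlapping calibrations of different types cannot literally be ``collapsed to a single calibration''), though the needed non-overlap can be recovered from optimality and the paper itself glosses over this point entirely.
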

\begin{proof}
We show that it is possible to transform an optimal schedule into another schedule satisfying the statement of the proposition without increasing its cost (a schedule using the same set of calibrations). Let $c_j$ be the last calibration that does not start at a time in $\Theta$. We can delay this calibration until:
\begin{itemize}
\item one job in calibration $c_j$ finishes at its deadline, and hence, it is not possible to delay this calibration anymore. So there is no idle time between the starting time of the calibration $c_j$ and this deadline. Thus the starting time of $c_j$ is in $\Theta$.
\item the current calibration meets another calibration. In this case, we continue to shift the current calibration to the right while this is possible. An overlap between calibration intervals may occur, but as mentioned before, we allow to recalibrate the machine at any time. If we cannot shift to the right anymore, either a job ends at its deadline (and we are in the first case), or there is no idle time between the current calibration and the next one. Since there is at most $n$ jobs and the next calibration starts at a time $d_i-j\lambda-h$ for some $i,j,h$, then the current calibration starts at a time $d_i-(j+1)\lambda-(h+h')$ where $h'$ is the number of jobs scheduled in the current calibration with $h'  +h\leq n$ and $j\leq n-1$.
\end{itemize}
\qed\end{proof}

Moreover, the set of starting times of jobs has also to be extended by considering the activation length $\lambda$.

\begin{definition}
Let $\Phi := \{ t+\lambda + a~|~t\in \Theta,~a=0,\ldots,n \}
\cup \bigcup_i  \{ r_i, r_i+1, \ldots, r_i+n  \}$.
\end{definition}

As for the starting time of calibrations, the worst case happens when we have to recalibrate after the execution of every job.

\begin{prop}\label{prop:act_position_job}
There exists an optimal solution in which the starting times and completion times of jobs are in $\Phi$.
\end{prop}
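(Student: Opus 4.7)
The plan is to adapt the shifting argument of Proposition~\ref{prop:act_position_calibration} from calibrations to jobs. I would start from an optimal schedule $\sigma$ whose calibrations already begin at times in $\Theta$ (guaranteed by Proposition~\ref{prop:act_position_calibration}), and transform only the job positions, keeping the calibrations and hence the total cost fixed, so that every job's start and completion time belongs to $\Phi$.

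Process the maximal blocks of consecutive jobs (jobs scheduled back-to-back with no idle time) from the last to the first. Given such a block $B$, shift it rightwards inside the union of active intervals of the machine, merging it with the block immediately on its right whenever the two touch. The shift halts when either (a) some job of the (possibly merged) block completes exactly at its deadline $d_i$, or (b) the rightmost job of $B$ ends at the right endpoint of an active segment. In case (a), the $k\leq n$ jobs of the final block occupy positions $d_i-k,\ldots,d_i-1$, all of the form $d_i-h$ with $h\leq n$, hence in $\Theta\subseteq\Phi$; a tail of jobs with later deadlines, if any, lies at $d_i,d_i+1,\ldots,d_i+a$ with $a\leq n$, still in $\Phi$ via the clause $t+a$ with $t=d_i\in\Theta$. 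In case (b), the right endpoint has the form $t_c+\lambda+\ell_c$ for some calibration starting at $t_c\in\Theta$; because $t_c$ is itself pinned either by a job hitting its deadline (which reduces to case (a)) or by the next calibration, Proposition~\ref{prop:act_position_calibration} gives $t_c=d_{i'}-j\lambda-h$ with $j\geq 1$, and the positions of $B$ then rewrite as $d_{i'}-(j-1)\lambda-h'$ with $h'\leq n$, placing them in $\Theta\subseteq\Phi$.

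If a block cannot be shifted right at all, it is already pinned on its left, either by a release date $r_j$ (so the block lies in $\{r_j,\ldots,r_j+n\}\subseteq\Phi$), by the activation instant $t_c+\lambda$ of its covering calibration (rewritten as in case (b) above), or by an earlier block that has already been processed and is therefore in $\Phi$. Iterating the shift on every remaining misaligned block produces, in finitely many steps, an optimal schedule with the desired property. The main obstacle is the arithmetic bookkeeping around the activation time $\lambda$: because $\lambda$ may exceed $n$, one cannot absorb $\lambda$ into the additive slack $a\in\{0,\ldots,n\}$ of $\Phi$, and must absorb it instead into the $j\lambda$ term of the $\Theta$ representation supplied by Proposition~\ref{prop:act_position_calibration}. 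Making this representation choice explicit for each stopping condition is the technical crux of the proof.
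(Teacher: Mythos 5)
Your plan is the mirror image of the paper's argument: the paper takes the first job not scheduled at a time of $\Phi$ and moves it (or the block it joins) to the \emph{left}, anchoring on release dates and on the starting times of calibrations, whereas you push blocks to the \emph{right}, anchoring on deadlines and on the right endpoints of active segments. Your case (a) (a job hitting its deadline) is sound, but case (b) is a genuine gap. The set $\Phi$ is built exclusively from the quantities $d_i-j\lambda-h+a$ and $r_i+a$ with $j,h,a\in\{0,\ldots,n\}$; no element of $\Phi$ involves a calibration length $\ell_k$. If the last calibration of an active segment starts at $t_c=d_{i'}-j\lambda-h\in\Theta$, its right endpoint is $t_c+\lambda+\ell_k=d_{i'}-(j-1)\lambda-(h-\ell_k)$, and a job pushed flush against it starts at $d_{i'}-(j-1)\lambda-(h-\ell_k+1)$. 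The offset $h-\ell_k+m$ is negative, with magnitude up to $\ell_k$, as soon as $\ell_k>h+m$, so this time is neither of the form $d-j'\lambda-h'$ with $h'\in\{0,\ldots,n\}$ nor within $n$ of such a point; your rewriting ``$d_{i'}-(j-1)\lambda-h'$ with $h'\leq n$'' silently discards $\ell_k$. Concretely: take $n=2$, $\lambda=5$, one calibration type with $\ell=10$, two jobs released at $0$ with $d_1=7$ and $d_2=100$. An optimal schedule calibrates once at $t_c=1=d_1-\lambda-1\in\Theta$ (usable for jobs on $[6,16)$) and runs job $1$ on $[6,7)$ and job $2$ on $[10,11)$. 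Your procedure pushes job $2$ to $[15,16)$, and $15\notin\Phi$, since every element of $\Phi$ in this instance lies in $[-5,9]\cup[88,102]$.

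The repair is precisely the paper's choice of direction. Moving a misaligned job leftward, it stops at its release date, at another job (forming a block that is then either anchored at a release date within distance $n$ or shifted further left), or at the first usable instant $t_c+\lambda$ of its calibration; all of these left anchors lie in $\Phi$ by construction ($r_i$ directly, and $t_c+\lambda=d_{i'}-(j-1)\lambda-h\in\Theta$ with the block contributing only the additive slack $a\leq n$). Your closing remark correctly identifies that $\lambda$ must be absorbed into the $j\lambda$ term of $\Theta$, but the lengths $\ell_k$ can be absorbed nowhere in $\Phi$, which is exactly why the argument must never anchor a job on the right end of a calibration. (A secondary confusion: a block that ``cannot be shifted right'' is pinned on its right, not its left; being pinned on the left is irrelevant to a rightward shift.)
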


\begin{proof}
We show that it is possible to transform the schedule into another one respecting the proposition without increasing the cost. First, we suppose that we have an optimal solution in which calibrations occur at time in $\Theta$ (from Proposition~\ref{prop:act_position_calibration}). Suppose now $i$ is the first job that is not scheduled at a time in $\Phi$ in such a solution. The idea is to schedule such a job earlier. Note that the calibrations are fixed in this proof. Two cases may occur:
\begin{itemize}
\item job $i$ meets another job $i'$ (Fig.~\ref{fig:push_jobs}(a)). In this case, we consider the continuous block of jobs $i'',\ldots, i',i$. We assume that at least one job in this block is scheduled at its release time, and job $i$ is at a distance at most $n$ of this release time (because there are at most $n$ jobs). Otherwise, we can shift this block of jobs to the left by one time-unit (Fig.~\ref{fig:push_jobs}(b)). Indeed, this shifting is possible because no job in $\{i'',\ldots,i'\}$ is executed at a starting time of a calibration (if it is the case, job $i$ is in $\Phi$ by definition). Since job $i'$ was in $\Phi$, by moving this block, job $i$ will be scheduled at a time in $\Phi$.
\item job $i$ meets its release time; thus its starting time is in $\Phi$.
\end{itemize}
\qed\end{proof}

\begin{figure}
\centering
\includegraphics[scale=1]{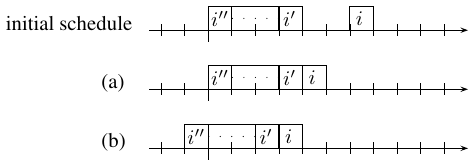}
\caption{Illustration of Proposition~\ref{prop:act_position_job}. The first schedule is the initial schedule in which the job $i$ is not schedule at a time in $\Phi$. The second schedule (a) is the situation when the job $i$ meets another job $i'$ and we are considering the continuous block of jobs $i'',\ldots,i $. The last schedule (b) corresponds to the situation when the block of jobs is scheduled one time-unit earlier.}
\label{fig:push_jobs}
\end{figure}

As mentioned in the introduction, we study the case of scheduling jobs when $K$ different types of calibrations are available. Recall that a calibration of type $k$ is defined by a length $T_k$ and a cost $f_k$. Moreover, calibrations have an activation length of $\lambda$ in which jobs cannot be processed. The cost of a schedule is the sum of the cost of the calibrations that occur (start) within the interval of the schedule.
We are now ready to present the table of our dynamic programming. 

\begin{definition} Let $S(j,u,v)=\{ i~|~ i\leq j \mbox{ and } u \leq r_i <v \}$ be the set of the $j$ first jobs that are released in $[u,v)$.
We define $F(j,u,v,t,k)$ as the minimum cost a schedule such that:
\begin{itemize}
\item the jobs in $S(j,u,v)$ are scheduled during the time-interval $[u,v)$,
\item the first calibration of such a schedule occurs no earlier than $u$,
\item the last calibration is type $k$ and starts at time $t$ for a length of $\lambda+T_k$. This calibration occurs not later than $v$ (and not earlier than $u$).
\end{itemize}
\end{definition}

Note that in the above definition, the time-interval $[t,t+\lambda)$ corresponds to the activation length of the last calibration. Since no job can be scheduled in this interval, it is not relevant to consider when $v$ is in $[t,t+\lambda)$. So in the initialization of the dynamic programming, $v$ should be larger than $t+\lambda$.

Moreover, the end of the schedule should not be after the end of the last calibration, i.e. $v\leq t+\lambda+T_k$. Indeed, let $v'$ be the ending time of the last calibration of the schedule, and suppose we have $v'\leq v$, the interval $[v',v)$ cannot contain any scheduled jobs since it is after the last calibration of the schedule. If a job is released in this interval, it cannot be scheduled according to our definition, therefore the cost of such a schedule is infinite. In the complementary case, if no jobs are released in the interval $[v',v)$, then the schedules ending at $v$ or $v'$ are the same. Thus, in all cases, we have $F(j,u,v',t,k)\leq F(j,u,v,t,k)$.
In the sequel, we consider that $F(j,u,v,t,k)=+\infty$ if $v>t+\lambda + T_k$.

The initialization is as follows:\\
$F(0,u,v,t,k):=f_k,~$ when $u\leq t < t + \lambda\leq v$ and $t+\lambda \leq v \leq t+\lambda + T_k$ for $u,v \in \Phi$, $t\in \Theta$ and $k=1,\ldots,K$.\\
$F(0,u,v,t,k):=+\infty$~otherwise.

We examine the cases depending on whether $r_j$ is in the interval $[u, v)$.
When $r_j \notin [u,v)$ (case 1), then the job $j$ is not scheduled in the schedule associated to $F(j,u,v,t,k)$, so $F(j,u,v,t,k)=F(j-1,u,v,t,k)$.
On the other hand, when $r_j\in [u,v)$ there are two cases: 
\begin{itemize}
\item case 2: job $j$ is scheduled in the last calibration,
\item case 3: job $j$ is not scheduled in the last calibration.
\end{itemize}
In both cases, we need to consider the jobs that are scheduled after the job $j$ but in the same calibration as the job $j$. If we consider that the job $j$ is scheduled at time $u'$, we use $G(j,u',v')$, where $v'$ is not after the end of the calibration, to represent the jobs scheduled after the job $j$.
%
%

\begin{definition}
We define $G(j,u,v)$ to be the schedule of the jobs in $S(j,u,v)$ such that they are scheduled in the interval $[u,v)$. 
\end{definition}
Here, we aim to schedule all jobs in $S(j,u,v)$ within the interval $[u,v)$. If the schedule is feasible, then the cost is 0, while if there is no feasible schedule, then the cost is infinite. We only need to schedule the jobs in {\sc edf} order to see whether the schedule is feasible.

\begin{prop}\label{prop:dp_act_calibration}
One has $F(j,u,v,t,k)=F'$ where
\begin{align*}
&\mbox{Case 1: } r_j\notin [u,v)\\
&F'=F(j-1,u,v,t,k)\\
&\mbox{Case 2: job $j$ is scheduled in the last calibration}\\
&F'=\min_{\substack{
	u'\in \Phi,~r_j\leq u'<t+\lambda+T_{k}\\
	t+\lambda \leq u' < v\\
	u'<d_j}}
	\left\{
	\begin{array}{r}
	F(j-1,u,u',t,k) \\ + G(j-1,u'+1,v)
	\end{array}
	\right\}
	\\
&\mbox{Case 3: job $j$ is not scheduled in the last calibration: }\\
&F'=\min_{\substack{
	u' , v' \in \Phi,~t'\in \Theta, 1\leq k'\leq K\\
	v'\leq t\\
	~r_j\leq u'<v' \leq t'+T_{k'}+\lambda\\
	t'+\lambda \leq u' < d_j
	}}
	\left\{
\begin{array}{c}
F(j-1,u,u',t',k')\\
+G(j-1,u'+1,v')\\
+F(j-1,v',v,t,k)
\end{array}
\right\}
\end{align*}
%
\end{prop}

The objective function for our problem is
$\min_{t\in \Theta, 1\leq k \leq K,v\in \Phi, v \geq \max_i r_i } $ $F(n,\min_i r_i,v,t,k)$.

\begin{figure*}[h!]
\centering
\includegraphics[scale=1.1]{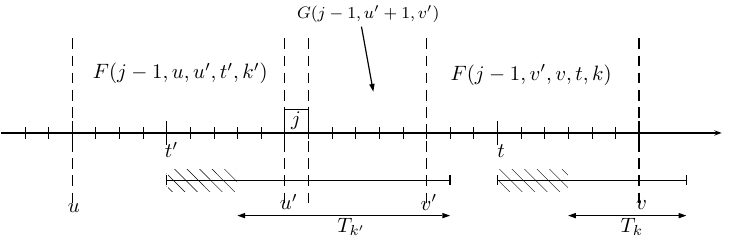}
\caption{Illustration of Proposition~\ref{prop:dp_act_calibration} where job $j$ is scheduled at time $u'$. We can divide the associated schedule in the interval $[u,v)$ into three sub schedules with the respective interval $[u,u')$, $[u'+1,v')$ and $[v',v)$.}
\label{fig:dp_calibration}
\end{figure*}

\begin{proof}
When $r_j\notin [u,v)$, we have necessarily $F(j,u,v,t,k)=F(j-1,u,v,t,k)$. In the following, we suppose that $r_j\in [u,v)$ which falls into the last two cases. 

\noindent
\textbf{We first prove that $F(j,u,v,t,k) \leq F'$ (feasibility).}

Case 2: We consider a schedule $S_1$ that realizes $F(j-1,u,u',t,k)$, a schedule $S_2$ that realizes  $G(j-1,u'+1,v)$. We build a schedule as follows: from time $u$ to time $u'$ use $S_1$, then execute job $j$ in $[u',u'+1)$, then from  $u'+1$ to $v$ we use $S_2$. Moreover, it contains all jobs in $\{i~|~i \leq j \mbox{ and } u \leq r_i < v\}$. 

Case 3: We consider a schedule $S_1$ that realizes $F(j-1,u,u',t',k')$, a schedule $S_2$ that realizes  $G(j-1,u'+1,v')$ and a schedule $S_3$ that realizes $F(j-1,v',v,t,k)$. We build a schedule as follows: from time $u$ to time $u'$ use $S_1$, then execute job $j$ in $[u',u'+1)$, then from  $u'+1$ to $v'$ we use $S_2$ and finally from  $v'$ to time $v$ we use $S_3$. Moreover, it contains all jobs in $\{i~|~i \leq j \mbox{ and } u \leq r_i < v\}$. 

Note that in both case, the interval $[u',v')$ is covered by the last calibration of $S_1$ and since the first calibration in $S_3$ does not begin before $v'$, then we have a feasible schedule.

So $F(j,u,v,t,k) \leq F'$.

\noindent
\textbf{We now prove that $F(j,u,v,t,k)\geq F'$ (optimality).}

Since $j \in \{i~|~i \leq j \mbox{ and } u \leq r_i < v\}$, job $j$ is scheduled in all schedules that realize $F(j,u,v,t,k)$.

Among such schedules, let $\mathcal{X}$ denote the schedule of $F(j,u,v,t,k)$ in which the starting time of job $j$ is maximal ($u'$ is maximal), and then $v'$ is maximal. We claim that all jobs in $\{i \leq j, u \leq r_i < v\}$ that are released before $u'$ are completed at $u'$. If it is not the case, we could swap the execution of such a job with the job $j$, getting in this way a feasible schedule with the same cost as before. More formally, let $i$ be a job with $\{i~|~i \leq j, u \leq r_i < u'\}$ that is scheduled after $u'+1$. We can swap the execution of job $i$ with job $j$, which results in a feasible schedule since job $j$ has larger deadline than the job $i$, and the job $i$ is released before $u'$. This will contradict the fact that the starting time of job $j$ is maximal.
Similarly, it can be noticed that no job in $S(j-1,u,v)$ is released at time $u'$. Otherwise, this job would be scheduled at time $u'$ since its deadline is smaller than the deadline of the job $j$, which also contradicts the fact that $u'$ is maximal.

Moreover, we claim that all jobs released in $[u'+1,v')$ are completed before $v'$. Suppose on the contrary that there is a job $i\in S(j-1,u'+1,v')$ that is scheduled after $v'$. It means that $v'$ is smaller since this job does belong to the schedule after $v'$. It contradicts the fact that $v'$ was maximal.

In the case 2, we consider a schedule $S_1$ that realizes $F(j-1,u,u',t,k)$, a schedule $S_2$ that realizes $G(j-1,u'+1,v)$. Then, the restriction of $S_1$ in the schedule $\mathcal{X}$ to $[u, u')$ will be a schedule that meets all constraints related to $F(j-1,u,u',t,k)$. Hence its cost is greater than $F(j-1,u,u',t,k)$. Similarly, the restriction of $S_2$ in the schedule $\mathcal{X}$ to $[u'+1, v)$ is a schedule that meets all constraints related to $G(j-1,u'+1,v')$.

Similarly in the case 3, we consider a schedule $S_1$ that realizes $F(j-1,u,u',t',k')$, a schedule $S_2$ that realizes $G(j-1,u'+1,v')$ and a schedule $S_3$ that realizes $F(j-1,v',v,t,k)$. Then, the restriction of $S_1$ in the schedule $\mathcal{X}$ to $[u, u')$ will be a schedule that meets all constraints related to $F(j-1,u,u',t',k')$. Hence its cost is greater than $F(j-1,u,u',t',k')$. Similarly, the restriction of $S_2$ in the schedule $\mathcal{X}$ to $[u'+1, v')$ is a schedule that meets all constraints related to $G(j-1,u'+1,v')$ and the restriction of $S_3$ in the schedule $\mathcal{X}$ to $[v', v)$ is a schedule that meets all constraints related to $F(j-1,v',v,t,k)$.

Finally, $F(j,u,v,t,k)\geq F'$.
\qed\end{proof}

\begin{prop}
The problem of minimizing the total calibration cost with arbitrary calibration
lengths, activation length and unit-time jobs can be solved in time
\nocolor{$O(n^{19}K^2)$. }
\end{prop}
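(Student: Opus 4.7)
The plan is to evaluate the table defined in Proposition~\ref{prop:dp_act_calibration} bottom-up in the index $j$, starting from the base case $F(0,u,v,t,k)$, and at the end return $\min_{t \in \Theta,\, 1 \leq k \leq K} F(n,\min_i r_i,\max_i d_i,t,k)$. Correctness is already guaranteed by Proposition~\ref{prop:dp_act_calibration}, so the whole content of the proof is a complexity count: bound $|\Theta|$ and $|\Phi|$, count the number of states, count the work per transition, and multiply.

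First I would bound $|\Theta|$ and $|\Phi|$. Directly from the definition, $|\Theta| \leq n(n+1)^2 = O(n^3)$. For $\Phi$, the naive estimate $|\Theta|\cdot(n+1) = O(n^4)$ can be sharpened by observing that any element $t+a$ of the first part of $\Phi$, where $t = d_i - j\lambda - h$, can be rewritten as $d_i - j\lambda + (a-h)$, and $a - h$ takes only the $2n+1$ values $\{-n,\ldots,n\}$. Hence the first part of $\Phi$ contains at most $n(n+1)(2n+1) = O(n^3)$ distinct values, the second part contributes $O(n^2)$, and overall $|\Phi| = O(n^3)$. This is the key observation that makes the analysis tight.

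Next I would count the DP states $F(j,u,v,t,k)$: the index $j$ ranges over $\{0,\ldots,n\}$, both $u$ and $v$ range over $\Phi$, $t$ ranges over $\Theta$, and $k$ ranges over $\{1,\ldots,K\}$. This gives $O(n) \cdot O(n^3) \cdot O(n^3) \cdot O(n^3) \cdot O(K) = O(n^{10}K)$ states. For the transition, the nontrivial case $r_j \in [u,v)$ in Proposition~\ref{prop:dp_act_calibration} is a minimum over the triples $(u',t',k')$ with $u' \in \Phi$, $t' \in \Theta$ and $k' \in \{1,\ldots,K\}$, each term being a constant-time lookup in the table already computed at level $j-1$. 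That is $O(n^3) \cdot O(n^3) \cdot O(K) = O(n^6 K)$ work per state. Multiplying, the total running time is $O(n^{10}K) \cdot O(n^6 K) = O(n^{16} K^2)$, which is exactly the claimed bound.

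The main obstacle is precisely the refined bound on $|\Phi|$: without noticing that the shift $a \in \{0,\ldots,n\}$ can be absorbed into the $-h$ term already present in the definition of $\Theta$, one would use $|\Phi| = O(n^4)$ and conclude only $O(n^{19}K^2)$. Everything else is a straightforward state/transition count, and the bottom-up evaluation order in $j$ guarantees that all values needed on the right-hand side of the recurrence have already been computed by the time they are queried.
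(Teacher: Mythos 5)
Your proposal is correct and follows essentially the same route as the paper: the same $O(n^3)$ bounds on $|\Theta|$ and $|\Phi|$ (including the same absorption of the shift $a$ into the $-h$ term when rewriting $\Phi$), the same $O(n^{10}K)$ state count, and the same $O(n^6K)$ per-state transition cost.
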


\begin{proof}
This problem can be solved with the dynamic program in Proposition~\ref{prop:dp_act_calibration}. Recall that the table is $F(j,u,v,t,k)$ where $j\in \{0,\ldots,n\}$, $u,v\in \Phi$, $t\in \Theta$ and $k\in \{1,\ldots,K\}$.
The size of both sets $\Theta$ and $\Phi$ is $O(n^3)$. Indeed, by rewriting the set $\Phi$, we have
\begin{align*}
\Phi =  &\{ r_i, r_i+1\ldots, r_i+n ~|~ 1\leq i\leq n \} \\
&\qquad \cup \{ t+a~|~t\in \Theta,~a=0,\ldots,n \}\\
= &\bigcup_i  \{ r_i, r_i+1\ldots, r_i+n  \} \\
&\bigcup_i \left\{ \begin{array}{c}
d_i-j\lambda-k+a,~j=0,\ldots, n\\
k=0,\ldots,n,~ a=0,\ldots,n
\end{array} \right\}\\
=&\bigcup_i  \{ r_i, r_i+1\ldots, r_i+n  \} \\
&\bigcup_i \{ d_i-j\lambda+k,~j=0,\ldots, n, ~k=-n,\ldots,n\}
\end{align*}
So, the size of the table is $O(n^{10}K)$. When each value of the table is fixed, the minimization is over the values $u', v'\in \Theta$, $t'\in \Phi$ and $k'\in \{1,\ldots,K  \}$, so the running time is $O(n^9K)$. 
Recall that the objective function is 
$\min_{t\in \Theta, 1\leq k \leq K,v\in \Phi, v \geq \max_i r_i } $ $F(n,\min_i r_i,v,t,k)$.
Thus, the overall time complexity is $O(n^{19}K^2)$.
\qed\end{proof}

Note that when there is no feasible schedule, the dynamic programming will return $+\infty$.

\section{Conclusion}
We considered different extensions of the model introduced by 
\cite{bender2013efficient}. We proved that the problem of minimizing the total calibration-cost on a single machine could be solved in polynomial time for the case of jobs with arbitrary processing times when the preemption is allowed. Then we proved that the problem becomes NP-hard for arbitrary processing times and many calibration types, even if the preemption of jobs is authorized. Finally, we considered the case with many calibration types, not instantaneous calibrations and unit-time jobs, proving that the problem can be solved in polynomial time by using dynamic programming techniques. An interesting question is whether it is possible to find a lower time-complexity algorithm for solving this version of the problem, either optimally or an approximation. Of course, it would be of great interest to study the case where more than one machines are available. Recall that the complexity of the simple variant studied by \cite{bender2013efficient} remains unknown for the multiple machines problem.

\bibliographystyle{spbasic}      
\bibliography{biblio}

\end{document}